\newtheorem{theorem}{Theorem}
\newtheorem{lemma}{Lemma}
\newtheorem{propo}{Proposition}
\newdefinition{definition}{Definition}
\newproof{proof}{Proof}
\journal{Journal of Computational and Applied Mathematics}
\begin{document}

\begin{frontmatter}

\title{Proving  Linearizability of Concurrent Stacks}

\author[1]{Tangliu Wen} %% Author name
\author[1]{Jie Peng\corref{cor1}}

%% Author affiliation
\address[1]{School of  Information Engineering, Gannan University of Science and Technology,  Ganan 431000, China }
\cortext[cor1]{Corresponding author}
%% Abstract
\begin{abstract}
Proving linearizability of concurrent data structures is crucial for ensuring their correctness, but is challenging especially for implementations that employ sophisticated synchronization techniques. In this paper, we propose a new proof technique for verifying linearizability of concurrent stacks. We first prove the soundness of the elimination mechanism, a common optimization used in concurrent stacks, which enables simplifying the linearizability proofs. We then present a stack theorem that reduces the problem of proving linearizability to establishing a set of conditions based on the happened-before order of operations. The key idea is to use an extended partial order to capture when a pop operation can observe the effect of a push operation. We apply our proof technique to verify two concurrent stack algorithms: the Treiber stack and the Time-Stamped stack, demonstrating its practicality. Our approach provides a systematic and compositional way to prove linearizability of concurrent stacks.
\end{abstract}

%% Keywords
\begin{keyword}Concurrent Stacks, Linearizability, Partial Order
%% keywords here, in the form: keyword \sep keyword

%% PACS codes here, in the form: \PACS code \sep code

%% MSC codes here, in the form: \MSC code \sep code
%% or \MSC[2008] code \sep code (2000 is the default)

\end{keyword}

\end{frontmatter}

%% Add \usepackage{lineno} before \begin{document} and uncomment
%% following line to enable line numbers
%% \linenumbers

%% main text
%%

%% Use \section commands to start a section
\section{Introduction}\label{sec1}
Linearizability \cite{herlihy1} is  a commonly accepted correctness criterion  for concurrent data structures.
Intuitively, linearizability requires that
 each concurrent execution of a concurrent data structure is equivalent
to a legal sequential execution,  and  the sequential execution preserves the order of non-overlapping operations.
Linearizability  imposes  synchronization between the concurrent operations and
affects  performance and scalability of concurrent data structures.
To achieve high performance, concurrent data structures often employ sophisticated fine-grained synchronization techniques \cite{4peng,5haas,6yang,7heller,8hoffman}. This makes it more difficult to prove  linearizability of concurrent data structures.

The standard way to prove linearizability is based on forward or backward simulations.
The proofs involve constructing an invariant which relates the state of the implementation to the state of the specification, and are conceptually
more complex and less amenable to automation.
Henzinger et al. \cite{9h} propose  an aspect-oriented approach to the linearizability proofs  of concurrent queues, which reduces the problem of proving linearizability of concurrent queues to checking four simple properties, each of which can be proved independently by simpler arguments. A key  property is that for two non-overlapping enqueue operations, the value enqueued by the older enqueue operation should be dequeued first.
The proof technique does not rely on program logics, does not need to identify linearization points, and is simple, easy-to-use. They apply the  technique to verify the Herlihy and Wing queue.
However, it is difficult to extend the proof approach to
concurrent stacks.  A challenge for this is that concurrent stacks have no similar properties. For instance, for two non-overlapping push operations, either
of the two values pushed by the two push operations is possible to be popped
first.

In this paper, we extend the Aspect-oriented proof approach  to concurrent stacks.
We first prove the soundness of
elimination mechanism used in the concurrent stacks.
This enables verifiers to  ignore  the elimination pairs and focus on  the ``common'' operations when they prove  linearizability of concurrent stacks.
Then we present a stack theorem,  which reduces the problem of proving linearizability of concurrent stacks to establishing a set of conditions based on the happened-before order of operations.
The basic idea behind the stack theorem is that for any linearizable execution of a concurrent stack, every
pop operation must pop the value pushed by a latest push operation of the
ones whose effects can be observed by the pop operation.
A pop operation will observe the effects of the push operations which precede it.
However, if a pop operation  is
interleaved with  a push operation, the pop operation may observe the effect of the push operation or not.
To solve  the  nondeterministic problem, our stack theorem  employs an extended partial order of the happened-before order such that
if a push operation is not bigger than a pop operation w.r.t. the extended partial order, then  the effect of the push operation can be observed by the pop operation.

Our verification conditions intuitively
express the ``LIFO'' semantics of concurrent stacks and can be verified by just reasoning about the properties based on the happened-before  order of operations, and  do not need to know other proof techniques.
We have successfully applied the proof technique to the Treiber stack \cite{Treiber} and the Time-Stamped (TS) stack \cite{5haas}.

To summarize, our contributions are:
\begin{enumerate}
\item We  prove the soundness of elimination mechanism used in
the concurrent stacks.
\item We present a  simple and complete  proof technique for the linearizability verification of concurrent stacks.
\item We apply our proof technique to prove  two  classic concurrent stacks.
\end{enumerate}

\section{Preliminaries}
\subsection{Linearizability}
In this section, we  introduce basic notations and review the definition of linearizability \cite{herlihy1}.
  An operation is a successful execution of a method.
The calling of a
method $ m$ with  argument $ v$ is represented as an atomic invocation event $ inv_om(v)$, and the return of a method with a return value $v$ is represented  as an atomic response event $ret_o(v)$, where $o$ is an operation identifier.
A thread executing a method  starts with the invocation event, executes its internal atomic events until the  final response event.
We denote an execution of a concurrent threads as a finite sequence of totally ordered atomic events.
A history  is a sequence of invocation and response  events  generated in an execution.
An invocation event matches a response event  if they belong to the same operation.
A history  is sequential if every invocation  event, except possibly the last, is immediately followed by its matching response event.
A sequential history of a concurrent data structure is legal if it satisfies the sequential specification of  the concurrent  data structure.  In the  sequential specification,  all operations are executed atomically.
A history is complete if every invocation event has a matching response event. An invocation event is pending in a history if there is no matching response event to it.
A  completion of an incomplete history $ H$ is a complete history gained by adding some matching response events to the end of $ H$ and removing some pending invocation events within $ H$. Let $\textit Compl(H)$ be the set of all completions of the history $ H$.
Let $ \prec_H$  denote  the  happened-before order of operations in a history $ H$. For any two operations $\textit op_1$ and $\textit op_2$ in $ H$,   we say that $ op_1$ precedes $ op_2$, denoted  $ op_1\prec_H op_2$,  if the response event of $ op_1$ precedes the invocation event of $ op_2$; we say that $ op_1$ is interleaved with $ op_2$,   denoted  $ op_1\simeq_H op_2$, if $ op_1\nprec_H op_2$ and  $ op_2\nprec_H op_1$.

A history $ H$ is linearizable with respect to a sequential specification if there exists a complete history $ C\in \textit Compl(H)$ and a legal sequential history $ S$ such that (1) $ S$  is
a permutation of $C$ ; (2) for any two operations $op_1, op_2$, if $ op_1 \prec_C op_2$, then $ op_1\prec_S op_2$.
$ S$ is called a linearization of $ H$.
A concurrent data structure is linearizable with respect to its sequential specification if every history of the concurrent data structure is linearizable with respect to the sequential specification.

Linearizability is equivalent to observational refinement \cite{filipovic3}. In other words, for a linearizable data structure $Z$ w.r.t. its corresponding  sequential specification $S$, every observable behaviour of any client program using $Z$ can also be observed when the program uses $S$ instead.

Generally, the standard  sequential  ``LIFO'' stack is used to characterize the sequential specification of   concurrent stacks. In this paper, all concurrent stacks we have verified are linearizable with respect  to the standard sequential specification, and we sometimes omit the sequential specification for simplicity.
In this paper, $ push(x)$  denotes a push operation
with an input parameter $x$;  $ pop(x)$ denotes a pop operation with a
return value $ x$.

In this paper, we only consider complete histories. As Henzinger et al.  have shown \cite{9h}, a purely-blocking data
structure is linearizable if every complete history of the concurrent data structure is linearizable. Purely
blocking is a very weak liveness property, and most  concurrent data structures  satisfy the liveness property. All concurrent stacks verified in this paper are purely blocking.

\subsection{Partially Ordered Sets}

The happened-before order of the operations in a history is a strict partial order.
Let $\prec_S$ be a strict partial order  of  the  set $ S$;
 we say  that $y$ is bigger than $x$  if $x,y\in S$ and $ x\prec_S y $; $ x$  is a maximal element in $ S$   if $x\in S$ and $ \forall y\in S.$  $ x\nprec_S y $;  $ x$ is a  minimal  element in $ S$  if  $x\in S$ and $  \forall y\in S.\; y\nprec_S x $.
The partial order $\prec_S$ is  a linear order   if $ \forall x, y\in S. \; x\prec_S y \lor y\prec_S x $.
 Let $ \prec_1$ and $ \prec_2$ be two partial orders of the same set $ S$; the partial order $ \prec_2$ is called an  extension of the partial order $ \prec_1$   if, $\forall a\in S\wedge \forall b\in S \wedge a\prec_1 b$, then $ a\prec_2 b$.
 If  a linear  order (i.e., a total order) is an  extension of a partial order, then it is called a linear  extension of the partial order.

\begin{propo}
Let $ \prec_S$ be a strict partial order on the set $S$, assume the sequence $ L_1, L_2,\dots,L_n $ preserves the partial order   (i.e.,  $\forall i,x,y. \;1\leqslant i,x,y \leqslant n \wedge L_i\in S \wedge  x<y\Longrightarrow L_y\nprec_S L_x$  ).
Then for any element $L'\in S $, $L'$ can be inserted into the sequence such that the new sequence still preserves the partial order $\prec_S$.
\end{propo}

In Appendix \ref{A2}, we show two algorithms by which $L'$  can be inserted into a proper  position such that the new sequence  preserves the partial order.
These algorithms  are used in the proof of  Lemma \ref{elimiLemma} and  Theorem \ref{stackTheorem}.

\section{Soundness of  Elimination Mechanism}
Elimination is a  parallelization optimization technique
for concurrent stacks \cite{HSY,Elimi1}.
A push operation adds an element to the top of a stack and a pop operation removes an element from the top of a stack. Thus, if a push
operation followed by a pop operation is performed on a
stack, the stack's state remains unchanged.
The elimination mechanism is based on the fact.
In a concurrent execution, if a push operation is interleaved with a pop operation and the pop operation pops the value pushed by the push operation, then they  are called an elimination pair.
In order to reduce frequency of shared-data accesses and  increase
the degree of parallelism, the elimination mechanism allows the elimination pairs to exchange their
values  without accessing the shared stack structure.

The different implementations of elimination mechanism in  different stacks \cite{HSY,Elimi1,TSStack} have been proven that it does not violate linearizability of concurrent stacks.
Next, we prove soundness of the elimination mechanism
 in  general case (i.e., not considering concrete stacks).

Let $\textit  Push(H)$ and $\textit  Pop(H)$  denote the sets of all push  and pop operations  in  a history $ H$, respectively. For simplicity, we assume
that all values  pushed by push operations are unique. We formalize elimination pairs as follows:

\begin{definition}
A pop operation $pop$ and a push operation $push$  is  an elimination pair in $ H$ if $push\in \textit Push(H)\,\land\,pop\in \textit Pop(H)\,\land\, push\simeq_H pop\,   \land\, \textit Mat(pop)=push$.
\end{definition}

The function $\textit Mat$ maps each pop operation to the push operation whose value is popped by the pop operation, will be defined formally in the next section.

\begin{lemma}\label{elimiLemma}
For a history  $H$ of a concurrent stack, let $H_s$ be a subsequence of $H$ obtained by deleting an elimination pair of $H$. If $H_s$ is linearizable with respect to the standard sequential   stack specification, then $H$ is also linearizable with respect to  the specification.
 \end{lemma}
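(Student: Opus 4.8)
The plan is to start from a legal linearization $S'$ of $H'$ (which exists since $H'$ is linearizable, and since we consider only complete histories we may take $Compl(H')=\{H'\}$) and to build a linearization $S$ of $H$ by re-inserting the deleted pair. Write the elimination pair as $e_1=push(v)$ and $e_2$, the pop returning $v$, so that $e_1\simeq_H e_2$ by the definition of an elimination pair. I would insert $e_1$ immediately followed by $e_2$ into $S'$ as a single adjacent block. The reason for keeping them adjacent is that a push of $v$ immediately followed by a pop returning $v$ has no net effect on the stack: whatever state $\sigma$ is reached at the chosen gap, after the block the state is again $\sigma$, so the suffix of $S'$ remains legal and the popped value is indeed $v$. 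Hence legality of the candidate $S$ holds for \emph{every} choice of insertion gap, and the only thing to get right is the happened-before order.

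So the whole difficulty reduces to exhibiting a gap in $S'$ at which the block can be placed consistently with $\prec_H$. Partition the operations of $H'$ into $A=\{\, o : o\prec_H e_1 \text{ or } o\prec_H e_2 \,\}$ (forced before the block), $B=\{\, o : e_1\prec_H o \text{ or } e_2\prec_H o \,\}$ (forced after it), and the rest $C$, which (being in neither set) are interleaved with \emph{both} $e_1$ and $e_2$ and therefore impose no constraint relative to the block. The key claim is that in $S'$ every element of $A$ precedes every element of $B$; granting this, the gap between the last $A$-element and the first $B$-element of $S'$ is a legal slot for the block, with the unconstrained $C$-operations free to fall on either side.

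The crux, and the step I expect to be the main obstacle, is proving that claim, because $A$ and $B$ are defined through $e_1$ and $e_2$, which are \emph{interleaved} rather than ordered, so plain transitivity of $\prec_H$ does not immediately apply. For $o_A\in A$ and $o_B\in B$, the cases where both are attached to the same endpoint (e.g.\ $o_A\prec_H e_1\prec_H o_B$) give $o_A\prec_H o_B$ by transitivity at once. The delicate cases are the crossed ones, e.g.\ $o_A\prec_H e_1$ together with $e_2\prec_H o_B$. Here I would argue at the level of real-time intervals: $o_A$ finishes before $e_1$ starts, $e_1$ overlaps $e_2$ (so $e_1$ starts before $e_2$ finishes), and $e_2$ finishes before $o_B$ starts; chaining these endpoint inequalities yields that $o_A$ finishes before $o_B$ starts, i.e.\ $o_A\prec_H o_B$. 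The symmetric crossed case is identical with $e_1,e_2$ swapped. Thus in all cases $o_A\prec_H o_B$, hence $o_A\prec_{H'}o_B$, and since $S'$ preserves $\prec_{H'}$ we get $o_A$ before $o_B$ in $S'$, as claimed.

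It then remains to assemble $S$ and verify the definition of linearizability. The relative positions of all $H'$-operations are inherited from $S'$, so $S$ still preserves $\prec_{H'}$; the block respects $\prec_H$ by the gap choice (and the internal order $e_1$ before $e_2$ is free, since $e_1\simeq_H e_2$), so $S$ preserves $\prec_H$ as a whole. Legality holds by the net-zero observation above, and since every operation's invocation is immediately followed by its response in the sequential $S$ and the per-thread program order is a sub-order of $\prec_H$, we have $S\lceil t = H\lceil t$ for every thread $t$. Therefore $S$ is a legal linearization of $H$, so $H$ is linearizable. The insertion step itself is exactly the single-element placement of Proposition~1 (Algorithm~1) applied to the adjacent block, the separation of $A$ from $B$ in $S'$ being precisely what guarantees a valid target slot.
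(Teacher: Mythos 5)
Your proposal is correct and takes essentially the same approach as the paper: both insert the elimination pair as an adjacent $push$--$pop$ block into a linearization of $H'$ at a gap chosen to respect the happened-before order, with legality following from the block's net-zero effect on the stack state. The only difference is organizational: the paper splits into two cases according to which operation of the pair begins first and verifies order preservation via Algorithm~1 and its Properties 1--5, whereas you unify both cases with the $A/B/C$ partition and the endpoint-chaining argument for the crossed cases.
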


\begin{proof}
Assume that $ push(x)$ and $ pop(x)$ are an elimination pair in $ H$.
 $ H_s$ is obtained from $ H$ by deleting the  elimination pair.
Assume that $ H_s'$ is a linearization of $ H_s$.
Since $ H_s'$ is a sequential history,
for any operation in $ H_s'$, we directly use the operation to replace its  invocation and response events. Consider the following two cases:
 in the first case, $ push(x)$ begins to execute  earlier than $ pop(x)$, as shown in Fig. 1; in the second case,  $ pop(x)$ begins to execute  earlier than  $ push(x) $.
\begin{figure}[h]
  \centering
  \includegraphics[height=0.12\textwidth]{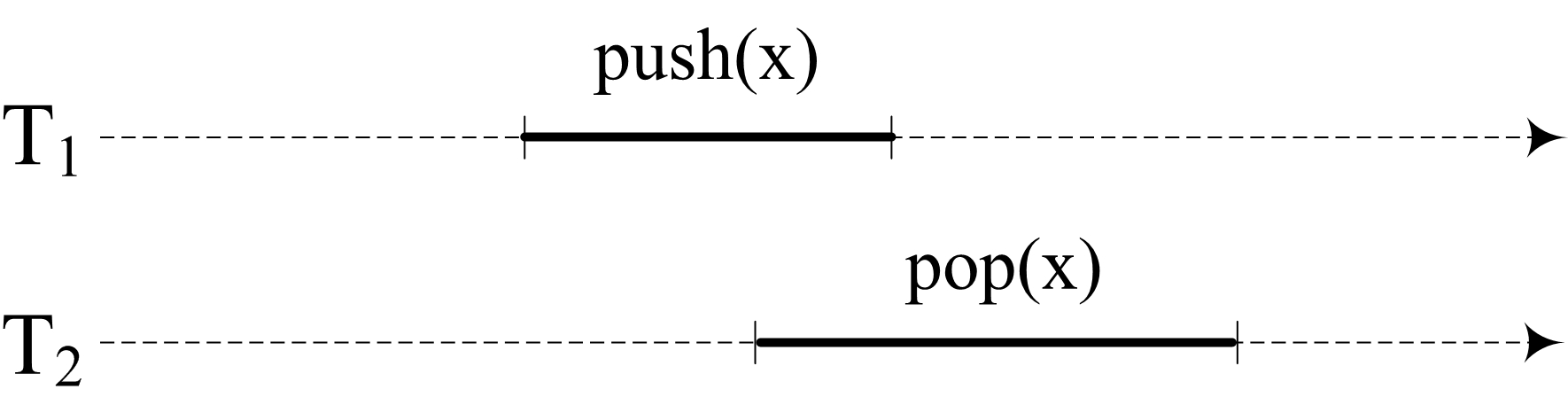}
  \caption{$ push(x)$ begins to execute  earlier than $ pop(x)$ }
 \end{figure}

In the following, we prove that  the lemma holds  in the first case.
The proof is done in two stages. Firstly,
we construct a new sequence   $ H'$  by the following steps:
we  insert $ pop(x)$ into $ H_s'$ by using Algorithm 1; then, insert $ push(x)$  just before $ pop(x)$.
Secondly, we prove that the new sequence  $ H'$ (strictly speaking, $ H'$ corresponds to the sequential history) is  a linearization  of $ H$.
The  proof for the second case is a similar process.
In the first case, we can easily get the following properties according to Fig. 1:

\begin{center}
Property 1.  $\forall op'.\; pop(x)\prec_H op'\Longrightarrow push(x) \prec_H op'$.
\\
 Property 2.  $\forall op'.\; op'\simeq_H pop(x)\Longrightarrow  op'\nprec_H push(x)$.
 \\
Property 3.  $\forall op'.\; op'\prec_H pop(x)\Longrightarrow push(x)\nprec_H  op'$.
\end{center}

We consider the following  two cases:

\textbf{Case 1.}  In $ H_s'$,  there is  no any  operation which precedes $ pop(x) $. In this case, after the two inserting operations, the sequence $ H'$ is the following form:
\[H'=push(x)^\smallfrown pop(x)^\smallfrown H_s'\]
By  Algorithm 1, the sequence $ pop(x)^\smallfrown H_s'$ preserves the happened-before order. By Property 1 and  Property 2,  any operation in  $ H_s'$ does not precede $ push(x)$. Thus, the new sequence $ H'$ preserves the happened-before order.
Obviously,  $ H'$ satisfies  the ``LIFO'' semantics. Thus, $ H'$  is a linearization of $ H$.

\textbf{Case 2.} In $ H_s'$, there exist  operations which precede $ pop(x)$.
In this case, after the step of  inserting $  pop(x) $,  the  operation just before $  pop(x) $  precedes $  pop(x) $ (by Algorithm 1). Assume that the  operation  just  before $  pop(x) $ is $op$.  After  the step of  inserting $  push(x) $ just before $ pop(x)$, the sequence is the following form:
\[H'=\dots,op^\smallfrown push(x)^\smallfrown pop(x),\dots\]
Since $ op\prec_H  pop(x)$, $ op$  either be  interleaved with $ push(x)$ or precede $ push(x)$. The   former  is  shown in Fig. 2. The operation $op$  finishes before $pop(x)$ begins and it is interleaved with $push(x)$.
In either case, there are the following properties:

\begin{center}
 Property 4.
 $\forall op'.\; op'\prec_H op\Longrightarrow   push(x)\nprec_H op'$.
 \\
 Property 5.
 $\forall op'.\; op'\simeq_H op\Longrightarrow  push(x)\nprec_H op' $.
\end{center}

By Property 1 and Property 2,  any operation on the right of  $ pop(x)$ does not precede $ push(x)$.
By Property 3, $ push(x)$ does not  precede $ op$.
By Property 4 and Property 5, $ push(x)$ does not  precede  any operation on the left of  $ op$.
Thus, the new sequence  $ H'$ preserves the happened-before order.
Obviously,  $ H'$ satisfies  the ``LIFO'' semantics.
Thus,  $ H'$ is a linearization of $ H$.
\end{proof}

\begin{figure}[ht]
 \centering
\includegraphics[width=2.2in]{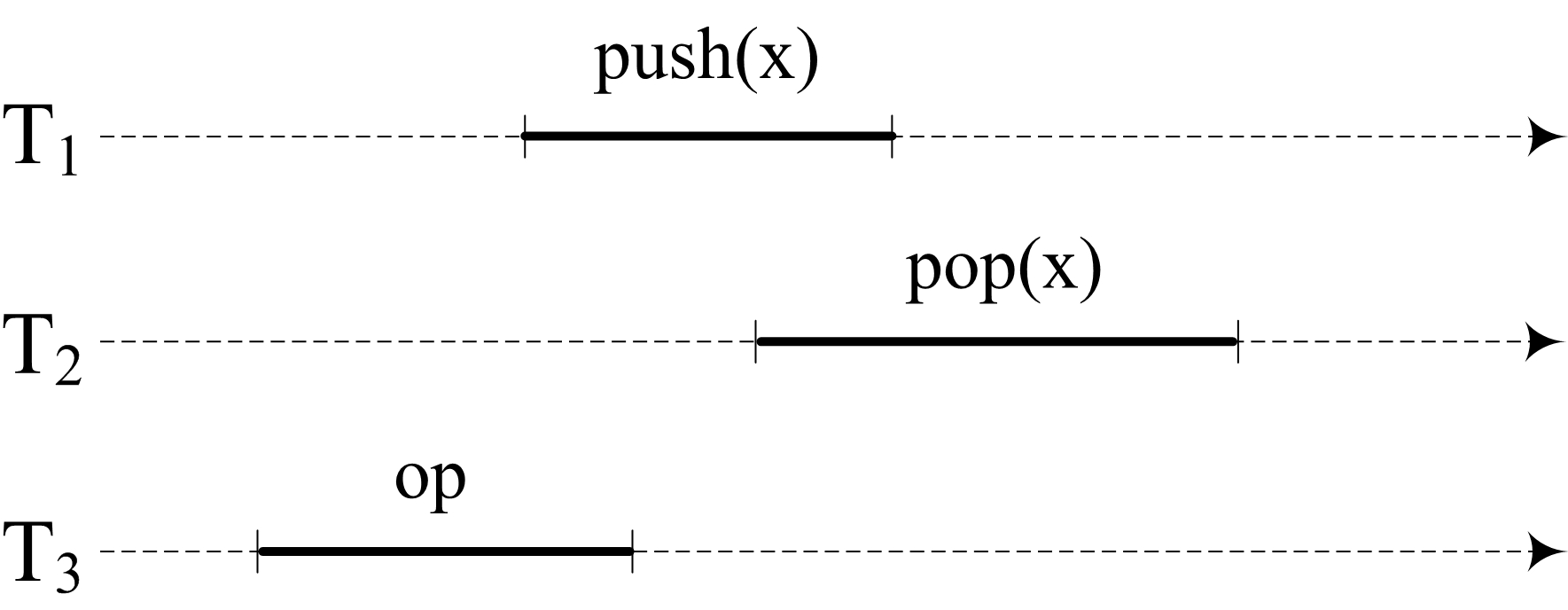}
\caption{the operation $ op$  is interleaved with $ push(x)$}
 \end{figure}

By the above lemma, we can get the following theorem.

\begin{theorem}\label{elimiTheorem}
For a history  $H$ of a concurrent stack, let $H_s$ be a subsequence of $H$  obtained  by deleting all elimination pairs of $H$. If $H_s$ is linearizable with respect to  the standard
sequential  stack specification, then $H$ is also linearizable with respect to  the specification.
 \end{theorem}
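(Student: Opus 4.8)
The plan is to prove the theorem by induction on the number $n$ of elimination pairs that are deleted from $H$ to obtain $H'$, using Lemma 4.1 as the single inductive step. The base case $n=0$ is immediate, since then $H = H'$ and the hypothesis that $H'$ is linearizable is exactly the desired conclusion.

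For the inductive step, I would list the elimination pairs of $H$ as $(push(x_1), pop(x_1)), \ldots, (push(x_n), pop(x_n))$ and form a chain of histories
\[
H = H_0 \supseteq H_1 \supseteq \cdots \supseteq H_n = H',
\]
where each $H_{i+1}$ is obtained from $H_i$ by deleting the single pair $(push(x_{i+1}), pop(x_{i+1}))$ and $\supseteq$ denotes the subsequence relation. The observation that makes this chain legitimate is that being an elimination pair is stable under deletion of other operations: removing operations from a history neither changes the relative order of the events of the remaining operations (so the interleaving relation $\simeq$ among them is preserved) nor alters their recorded return values. Hence each $(push(x_{i+1}), pop(x_{i+1}))$ is still a genuine elimination pair of $H_i$, and Lemma 4.1 applies to the pair $(H_i, H_{i+1})$ with $H_i$ in the role of ``$H$'' and $H_{i+1}$ in the role of ``$H'$''.

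With the chain in place, I would argue backwards along it. By assumption $H_n = H'$ is linearizable. Applying Lemma 4.1 to $(H_{n-1}, H_n)$ yields that $H_{n-1}$ is linearizable; applying it to $(H_{n-2}, H_{n-1})$ yields that $H_{n-2}$ is linearizable; and so on. After $n$ applications we conclude that $H_0 = H$ is linearizable, which is the statement of the theorem.

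The one step that deserves genuine care, and that I expect to be the main obstacle, is justifying the stability claim rigorously: that deleting one elimination pair does not create, destroy, or alter any of the remaining elimination pairs. This reduces to two facts. First, for operations common to $H_i$ and $H_{i+1}$ the happened-before order is inherited unchanged, since a subsequence preserves the order of the events it retains; thus a push interleaved with a pop in $H$ remains interleaved with it in every $H_i$ containing both. Second, return values are intrinsic to the operations of a fixed history and are therefore untouched by deletion, so the value returned by each surviving pop is unaffected. Once these are established, the number of elimination pairs drops by exactly one at each step, the induction terminates after finitely many steps, and the order in which the pairs are removed is irrelevant, because $H'$ is uniquely determined as $H$ with all elimination-pair operations deleted.
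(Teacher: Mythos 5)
Your proof is correct and takes essentially the same route as the paper: the paper simply states that the theorem follows ``by the above lemma,'' i.e., by iterated application of Lemma 4.1, which is exactly the induction you spell out. Your explicit justification that the remaining pairs stay elimination pairs under deletion (preservation of interleaving and return values in subsequences) is a detail the paper leaves implicit, and it is handled correctly.
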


\section{Conditions for Linearizability of Concurrent Stacks}
This section first describes our motivation and challenges of verifying concurrent stacks, then presents the stack theorem which reduces the problem of proving linearizability to establishing a set of conditions based on the happened-before order of operations, and proves that they are necessary and sufficient conditions for proving linearizability of stacks.
\subsection{Motivation and Challenges of  Concurrent Stacks}
Henzinger et al. \cite{9h} propose  an aspect-oriented approach to the linearizability proofs  of concurrent queues, which reduces the problem of proving linearizability of concurrent queues to checking four simple properties.
A key property is  that for any two non-overlapping enqueue
operations $enq_1$ and $enq_2$ in a concurrent execution, if $enq_1$ precedes $enq_2$ (w.r.t. the happened-before order), then the value inserted by $enq_2$ cannot be
removed earlier than the one inserted by $enq_1$, i.e., $deq_2$ cannot precede $deq_1$, where $deq_2/deq_1$ removes the value
inserted by $enq_2/enq_1$.
In a concurrent queue, as enqueue and dequeue operations contend for the tail and head respectively. However, in a concurrent stack, the top pointer is typically the unique hot spot contended by pop and push operations. It is difficult  to extend the  proof approach  to concurrent stacks. A challenge for this is that concurrent stacks have no similar properties.  For instance, for two non-overlapping push operations,  either  of the two values pushed by the two push operations is possible to be popped first.

Our basic idea is that in a linearizable execution of a concurrent stack, every pop operation must pop the value pushed by a latest  push operation  of the ones whose effects can be observed by the pop operation.
For example,  in the execution shown in Fig. 3, $push(x)$, $push(y)$ and $push(z)$ precede $pop(z)$, and $pop(z)$ can observes their effects.
$pop(z)$ pops the value pushed by  $push(z)$ which is a latest one (i.e., a minimal one w.r.t. the happened-before order)  of the three  push operations.
After the value $z$ is popped,  $pop(y)$ can observe the effects of $push(x) $ and $push(y)$ and  pops the value $y$ pushed by  $push(y)$ which is the latest one   of the two  push operations.
If $pop(z)$ or $pop(y)$ pops the value pushed by  $push(x)$, which is not the latest one, the execution
will not be linearizable.

\begin{figure}[ht]
  \centering
  \includegraphics[height=0.23\textwidth]{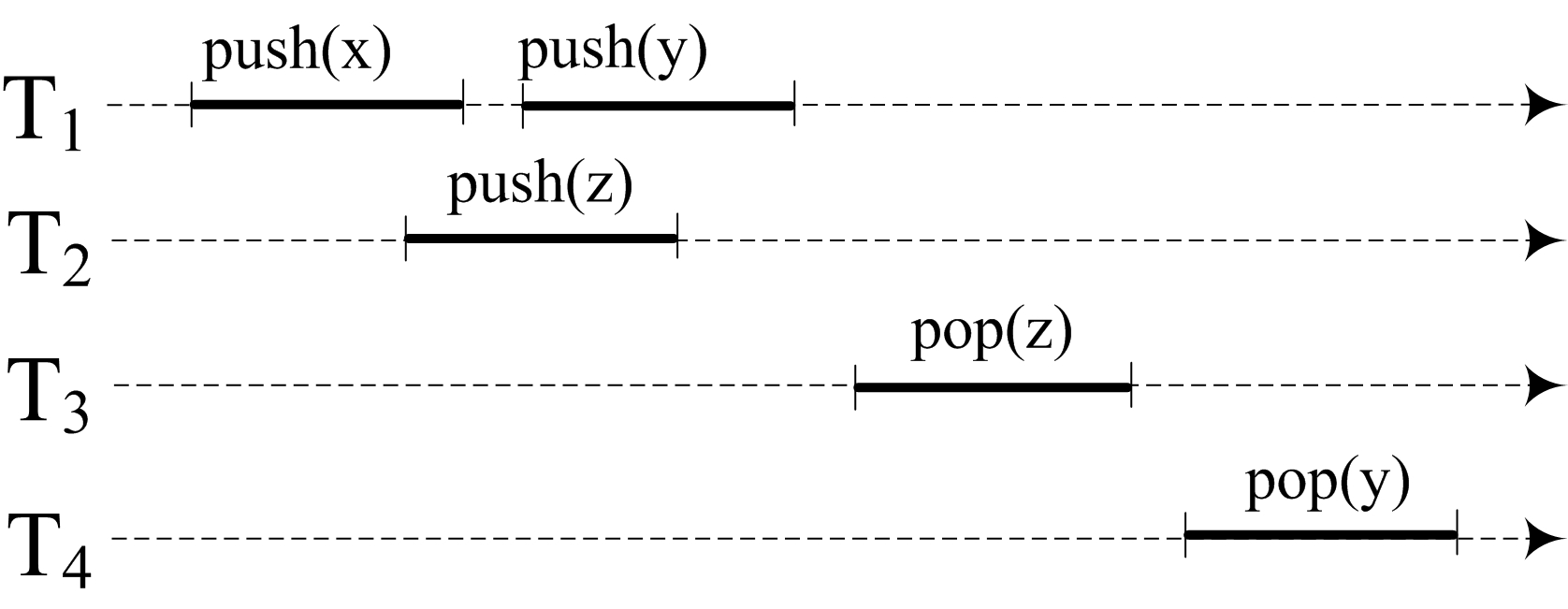}
  \caption{Example execution}
 \end{figure}

A pop operation will observe the effects of the push operations which precede it.
However, if a pop operation  is
interleaved with  a push operation, the pop operation may observe the effect of the push operation or not.
For example, in the execution shown in Fig. 4, $pop(o)$ is interleaved with $push(m)$.
$pop(o)$ does not observe the effect of $push(m)$, otherwise, $pop(o) $ should  pop the value $ m $ of $push(m)$ (because $push(o)$ precedes $push(m)$).

\begin{figure}[ht]
  \centering
  \includegraphics[height=0.23\textwidth]{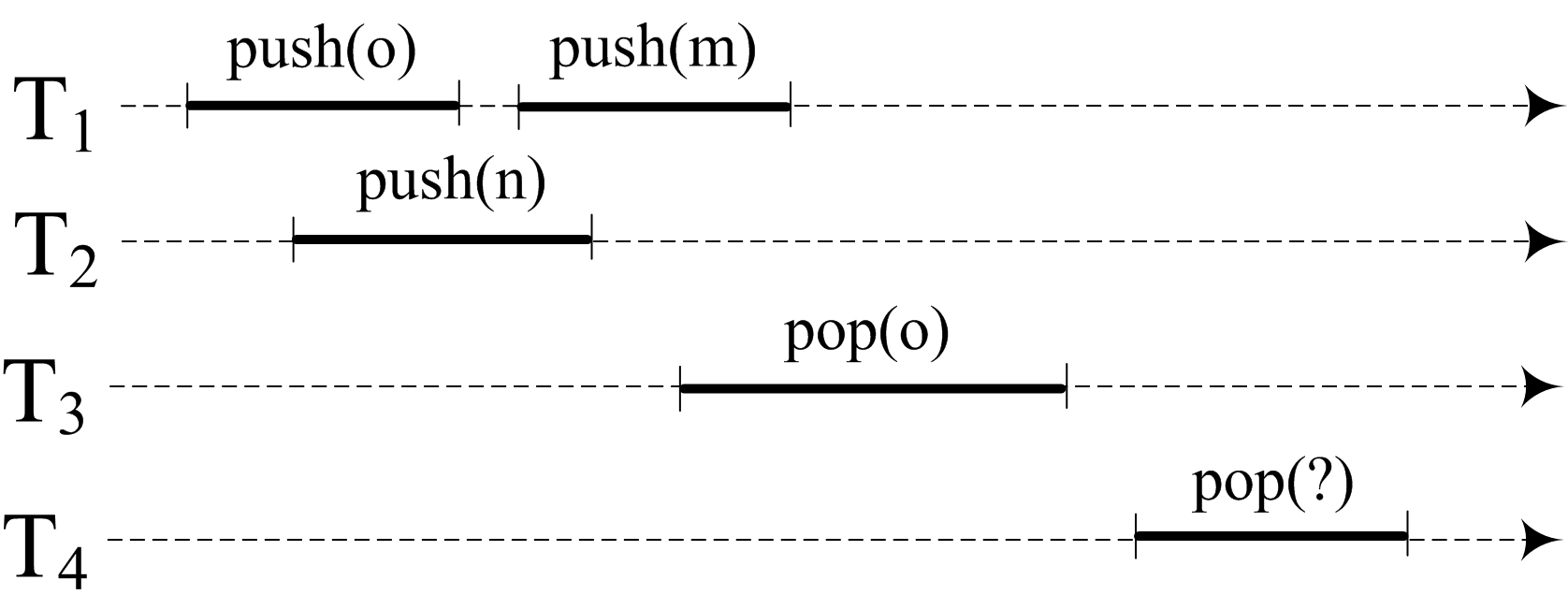}
  \caption{$ pop(?)$ cannot pop the value $n$ }
 \end{figure}

To solve  the above nondeterministic problem, our stack theorem  employs an extended partial order of the happened-before order such that
if a push operation is not bigger than a pop operation w.r.t. the extended partial order, then  the effect of the push operation can be observed by the pop operation.
Informally, in an execution of a concurrent stack, if every pop operation  pops the value pushed by a maximal  push operation  of the ones which are not bigger than the pop operation (w.r.t. the extended partial order), then the execution is linearizable.

\subsection{Stack Theorem}

For simplicity, we assume
that all values  pushed by push operations are unique.
We map each pop operation to the push operation whose value is popped by the pop operation, or to $ \epsilon$ if the pop operation returns $\textit empty$. We say that a mapping is safe if (1) a pop operation always pops the value pushed by a push operation or returns $\textit empty$; (2) a value is popped by a pop operation at most once.
Obviously, every  linearizable history of a concurrent stack has  a unique safe matching.
If there does not exist a safe matching, then the history is not linearizable.
We formalize the notion as follows:

\begin{definition}
A mapping $\textit Mat$  from $\textit Pop(H)$  to $\textit Push(H)\cup\{\epsilon\}$  is safe if
\begin{enumerate}
 \item  $  \forall  pop\in\textit Pop(H)$,  if $\textit Mat(pop) \neq\epsilon$, then the value popped by the pop operation  is pushed by the push operation $ \textit Mat(pop)$.
 \item  $ \forall pop\in \textit Pop(H)$, if $\textit Mat(pop)= \epsilon$, then the  pop operation returns $\textit empty$.
 \item  $ \forall pop, pop'\in\textit Pop(H).$ if $ pop\neq pop' \wedge\textit Mat(pop)\neq\epsilon$, then  $\textit Mat(pop)\neq \textit Mat(pop')$.
 \end{enumerate}
\end{definition}

For a history $ H$, let $ \textit Mat^{\textit{-1}}$ map the set of  push operations whose values have been popped  to the set of non-empty pop operations such that $ \forall pop\in \textit Pop(H)$, $  \forall push\in \textit Push(H). $  $\textit  Mat(pop)=push\Longrightarrow \textit Mat^{\textit{-1}}(push)=pop $.
The mapping $\textit Mat^{\textit{-1}}$ is used in the proof of Theorem \ref{stackTheorem}.

We say that the sequence $pop_1,\dots,pop_n$
is a linearization of pop operations of $ H$, if
$\textit Pop(H)\!=\!\{pop_1, \dots, pop_n\}\wedge x\!<\!y \Longrightarrow pop_y\nprec_H pop_x $.
Similarly, we say that the sequence $push_1,\dots,push_m$
is a linearization of push operations of $H$, if
$\textit Push(H)\!=\!\{push_1, \dots, push_m\}\wedge x\!<\!y \Longrightarrow push_y\nprec_H push_x $.

We now state the stack theorem,
which gives necessary and sufficient conditions for linearizability of concurrent stacks.
Any  history comprising only the events of push operations is always linearizable.
The reason is that linearizability is a property of externally observable behaviors (i.e., histories) and  the return value of a push operation is always  $\textit null$ or the signal $\textit ok $.
Such histories can be ignored  when  we verify linearizability of concurrent stacks. Our stack theorem is stated below.

\begin{theorem}\label{stackTheorem}
Let $H$ be a complete history of a concurrent stack  containing the events of pop operations.  $H$  is linearizable with respect to  the standard  sequential  stack specification  iff
there exists a linearization $pop_1, pop_2, \dots, pop_n$ of  pop operations,   an extension
 $\prec_e $ of the happened-before order $\prec_H$   and a safe mapping $\textit  Mat$ from  $\textit  Pop(H)$  to $\textit  Push(H)\cup\{\epsilon\}$,  such that
 \begin{enumerate}
 \item  $\forall i,j.\; 1<j\leqslant n$,  $\textit    1\leqslant i<j. \; \textit  Mat(pop_i)\neq \epsilon \Longrightarrow pop_j \nprec_H \textit  Mat(pop_i)$;
 \item   $1\leqslant\forall i\leqslant n$. if $\textit  Mat(pop_i)\neq\epsilon$,    then  $\textit Mat(pop_i)\in B_i $ and  $ \forall push $ $\in B_i.\;\textit  Mat(pop_i)\nprec_e push$;
\item  $1\leqslant \forall i\leqslant n$. if  $\textit  Mat(pop_i)=\epsilon$,  then (a)
 $ \forall push\in \textit  Push(H).\; push\prec_H pop_i \Longrightarrow push \in A_i$; (b) $\forall push\in C_i ,\forall x\leqslant i-1. \; push \nprec_H pop_x $ and $ push\nprec_H \textit Mat(pop_x)$.
 \end{enumerate}
\end{theorem}

\noindent
Here, $1\leqslant\forall i\leqslant n$,  the set $A_i=\{\textit Mat(pop_1),\dots,\textit Mat(pop_{i-1})\}$, $B_i=\{push\mid push\in \textit  Push(H)- A_i \wedge \; pop_i\nprec_e  push   \}$, $ C_i=\{push\mid push\simeq_H pop_i \wedge  push\in \textit  Push(H)-A_i \}$.

The set $ A_i$ is a set of the push operations whose values  have  been popped by the pop operations ahead of $ pop_i$.
 $ B_i $ is a set of the push operations whose values have not been popped by the pop operations ahead of $ pop_i$ and which is not bigger than $ pop_i$  w.r.t.  $ \prec_e$. This means that the effects of the push operations in the set $ B_i $  can be observed by $ pop_i$.
 Informally,
the second condition requires that every non-empty pop operation always pops the value pushed by a latest  push operation (i.e., maximal w.r.t. $ \prec_e$) in the set $ B_i$.

$ C_i$ is a set of the push operations  which are interleaved with $ pop_i$  and whose values have not been popped by the pop operations ahead of $  pop_i$.
The third condition requires that  for any push operation which  precedes the empty pop operation $ pop_i$, the value of the push operation has been popped by a  previous pop operation (i.e., by $ pop_x, x<i$);
for any push operation which is interleaved with the empty pop operation, if its value  is not popped by a previous pop operation,  then the push operation does not precede the previous pop operations and their matched  push operations.

\begin{proof} ($\Longleftarrow$).
We first prove that the theorem holds when $ H$ does not contain empty pop operations, then further extend the result to the case where $ H$ contains  empty pop operations.

\vspace{0.5em}
1. $ H$ is linearizable when $ H$ does not contain
 empty pop operations.

\textbf{Proof.} The proof is done in the following steps. Firstly,
  we  construct a linearization of push operations.
  Secondly, we insert $pop_1,pop_2,\dots,pop_n$, one
after another, into the linearization sequence of  push operations.  Finally, we show  that the final sequence is a linearization of $ H$.

\textbf{Step 1.}   Construct a linearization of push operations.

Assume  the number of push operations in $ Push(H)$ is \textit{m}. We construct a linearization of push operations by the following rules:
First,
if all the maximal  push operations w.r.t. the partial order $  \prec_e$ in $ Push(H)$  have no matching pop operations,
then choose any  one from the maximal  push operations; otherwise,
choose the one from the maximal
push operations  such that  its matching pop operation  is the smallest w.r.t.  the linearization order of  pop operations.
 Formally,  Let $ push_m$ denote the chosen element, let $ S=\{x \mid x\in Push(H)\land \forall  y\in Push(H). \; x\nprec_e y\}$, $  push_m$ $\in S $ and  $ \forall  push'\in S-\{push_m\}$,
if $ push_m$  and  $ push'$ has a matching pop operation, then
$\textit Mat^{\textit{-1}}(push_m) \prec_L \textit Mat^{\textit{-1}}(push')$, where $  \prec_L $ denotes the linearization order of the pop operations.

Second, delete $ push_m$  from $ Push(H)$; By the same rule, choose an element $ push_{m-1}$ from the rest of $ Push(H)$; and insert it before $ push_m$ (i.e., $ push_{m-1}, push_m$).
Finally, continue to construct the sequence  in the same way until all push operations of  $ Push(H)$ are chosen.

 By the above construction, the final sequence   $push_1, \dots,push_m$
 is a linearization of all push operations, and has the following property:

\textbf{Property  1.}  For any push operation $ push_x$, $ push_x$ is a maximal  push operation w.r.t.  $  \prec_e $, among the  push operations on the left of  $ push_x$ (i.e., the set $ \{push_1,\dots,push_x\}$), and if $ push_x$ has a matching pop operation, its matching pop operation is the smallest w.r.t.  $  \prec_L $, among  the matching pop operations of the push operations which are in the set $ \{push_1,\dots,push_x\}$ and are not smaller than $push_x$ w.r.t.  $  \prec_e $.

\textbf{Step 2.} We insert $pop_1,\dots,pop_n$, one
after another, into the sequence $push_1,\dots,push_m$.
Every time we insert a pop operation, we show that the new sequence satisfies the ``LIFO'' semantics and preserves  the happened-before order  $  \prec_H$.

\textbf{Step 2.1.} Using Algorithm 2,  we  insert
$  pop_1$ into the linearization sequence of push operations. After the inserting operation, the new sequence preserves  $  \prec_H$.

If $   \textit  Mat(pop_1)$ is on the left of $  pop_1$, then by  $  Property \;1$   and  the second condition of the theorem,  $ \textit Mat(pop_1)$ is  just before $  pop_1$  (i.e.,  $  \dots, \textit  Mat(pop_1),pop_1,\dots $).
If $ \textit Mat(pop_1)$ is on the right of $   pop_1$,
then $   pop_1$  and $ \textit Mat(pop_1)$ is an elimination pair. The reasons are as follows:
Let $  push_x$ denote the push operation just after $   pop_1$.
By Algorithm 2, $   pop_1 \prec_e push_x$ , thus  $ \textit Mat(pop_1)\neq push_x$.
If $  \textit Mat(pop_1)\prec_H pop_1$, then $  \textit Mat(pop_1) \prec_e pop_1$. By  $   pop_1\prec_e push_x $, then
$ \textit Mat(pop_1) \prec_e push_x$, contradicting the fact  that
$ \textit Mat(pop_1)$ is on the right of $  push_x$.
In this case, we delete the elimination pair from the sequence.
After the step,  the new sequence satisfies the ``LIFO'' semantics and preserves  $ \prec_H$.

\textbf{Step 2.2.}  Similar to $   pop_1$, for any  $   pop_i$, $  2\leqslant i\leqslant n $, insert it into the new sequence.

In the new sequence obtained after inserting $  pop_{i-1}$,  we use   Algorithm 2 to search the first push operation which is bigger than $  pop_i$ w.r.t. $  \prec_e$.
If such push operation  (denoted by $  push_y$) is found,
we insert $  pop_i$ just before $  push_y$; otherwise (in this case, the new sequence has no any push operation which is bigger than
$  pop_i$ w.r.t. $  \prec_e$), we insert $   pop_i$ into the end of the new sequence.

After the step of  inserting $   pop_i$,
if $ \textit Mat(pop_i)$ is on the right of $   pop_i$,
we delete them from the new sequence (in this case, $   pop_i$  and $ \textit Mat(pop_i)$ is an elimination pair);
Otherwise, we consider the following two cases:

\textbf{Case 1.}  $   pop_i$ is not between  any previous pop operation $ pop_x, x<i$ and its matching push operation $\textit Mat(pop_x)$.
In this case, we prove the new sequence satisfies the ``LIFO'' semantics and preserves $  \prec_H $.

Since $ \textit Mat(pop_i)$ is on the left of $   pop_i$ in this case,   by  $  Property \;1$ and the second condition of  the theorem,  we get that the new sequence  satisfies the ``LIFO'' semantics.
Next, we prove that the new sequence preserves $  \prec_H$.
According to Algorithm 2, $  pop_i$ and any push operation of  the new sequence do  not  violate $  \prec_H$.
According to the order of inserting pop operations, $  pop_i$  and any pop operation on the left of $  push_y$  do not violate $  \prec_H$.
$  pop_i$  and any pop operation on the right of $  push_y$  do not violate $  \prec_H$.
The reason for this is as follows.
For any pop operation  $  pop_x$ on the  right of $  push_y$,
assume  by contradiction  that $  pop_x\prec_H pop_i $.
By the assumption, we get  $  pop_x\prec_e pop_i $.
Since $  pop_i\prec_e push_y $, $  pop_x\prec_e push_y$.
Since $  pop_x\prec_e push_y$, $  pop_x$ is not on the  right of $  push_y$ (by Algorithm 2 and the rule of constructing the new sequence),  thereby yielding a contradiction.

\textbf{Case 2.}  If $   pop_i$ is  between   some previous pop operations and their matching push operations, assume that $   pop_z$ is the last one among the pop operations and $ \textit Mat(pop_z)=push_z$, where $z<i$.
then move $   pop_i$ to the right of $   pop_z$, as follows.
\[\boldmath{seq:} \dots, push_z, \dots,  pop_i,push_y,\; \dots,\; pop_z, \dots\]
\[\boldmath{seq':}  \dots, push_z, \dots,  push(y), \dots, \;pop_z, pop_i, \dots\]

$  seq$ and $  seq'$ are the sequences  before and after the move transforming, respectively.
In  the new sequence $  seq'$, $  pop_i$ is not between any previous pop operation and its matching push operation.
Since $ \textit Mat(pop_i)$ is on the left of $   pop_i$ in this case,
by  $  Property\; 1$ and the second  condition of  the theorem,   $  seq'$  satisfies the ``LIFO'' semantics.
Next, we show that $  seq'$ preserves  $  \prec_H $.

Before inserting $   pop_i$,  the sequence satisfies the  ``LIFO'' semantics. Thus, for all push operations between $   push_z$ and $   pop_z$,  their matching pop operations are  between $   push_z$ and $   pop_z$.
By the first condition of the theorem and $z<i$, $   pop_i$  does not precede any push operation  between $   push_z$ and $   pop_z$.
By  the order of inserting pop operations and $z<i$,
$   pop_i$ does not precede any pop operation  between $   push_z$ and $   pop_z$.
Thus, after the move transforming, $   pop_i$  and any operation  between $   push_z$ and $   pop_z$  do  not  violate $  \prec_H$.
Similar to the proof in  Case 1, we can get that  $   pop_i$  and any other operation do  not  violate $  \prec_H$.
Thus,  the new sequence $  seq'$ preserves  $  \prec_H $.

\vspace{0.5em}
2. $  H$ is also linearizable when $  H$ contains
 empty pop operations.

\textbf{Proof.} We construct a linearization $  H'$ of $  H$ by the following process: if $ \textit Mat(pop_i)=\epsilon$, let $  A$ denote the linearization of  $pop_1,\dots,pop_{i-1}$ and their matching push operations (which is constructed by the above method), let $  B$ denote the linearization of the rest of $  H$.
Let $  H'=A^\smallfrown (pop_i)^\smallfrown B$.

By  the process of  constructing $  H'$, $  H'$ satisfies the ``LIFO'' semantics.
Next, we show that $  H'$ preserves  $  \prec_H$. Obviously, any two pop operations do not violate $  \prec_H $ in $  H'$.
In the following, we show that in $  H'$, (1) any two push operations do not violate $  \prec_H $, and
 (2) any pop operation and any push operation do not violate $  \prec_H $.

Let $  push_x$/$  pop_x$ be a push/pop operation in
$ A$.
Let $  push_y$/$  pop_y$ be a push/pop operation in
$ B$.
By the first condition of the theorem, we can get that
$  pop_y\nprec_H push_x$ and $  pop_i \nprec_H push_x$.
By the first part of the  third  condition, we can get $  push_y\nprec_H pop_i$.
If $   pop_i\prec_H push_y$, then $  push_y \nprec_H push_x$ and
$  push_y \nprec_H pop_x$.
If $   push_y \simeq_H pop_i$, then by the second part of the  third  condition, we can get  $  push_y \nprec_H push_x$ and
$  push_y \nprec_H pop_x$.

($\Longrightarrow$)
 \textbf{Proof.} Since $  H$ is linearizable, there exists a safe
 mapping  $\textit{Mat}$  from $  Pop(H)$  to $  Push(H)\cup\{\epsilon\}$.
We assume that $  H'$ is a linearization of  $  H$. The linear order  $  \prec_{H'}$ is an extension of  $  \prec_H$.
Let $pop_1,\dots,pop_n$ be  the maximal subsequence of
$  H'$ consisting of  pop operations. Obviously, it is a linearization of  pop operations.
Based on the extended partial order $  \prec_{H'}$,
 the linearization of  pop operations and the safe
 mapping  $ \textit Mat $,  we show  that the three conditions of Theorem \ref{stackTheorem} hold.

\vspace{0.5em}
1. $  \forall i,j.\; 1<j\leqslant n$,  $   1\leqslant i<j. \;\textit Mat(pop_i)\neq \epsilon \Longrightarrow pop_j \nprec_H\textit Mat(pop_i)$.

\textbf{Proof.}
Since $  pop_i \prec_{H'} pop_j$
and $ \textit Mat(pop_i) \prec_{H'} pop_i$,   $ \textit Mat(pop_i) \prec_{H'} pop_j$. Thus,  $  pop_j \nprec_H\textit Mat(pop_i)$ (because $  \prec_{H'}$ is a linear extension of  $  \prec_{H}$).

\vspace{0.5em}
2. $1\leqslant\forall i\leqslant n$. if $ \textit Mat(pop_i)\neq\epsilon$, let the set $B_i=\{push\mid push\in \textit  Push(H)-\{\textit Mat(pop_1),$ $\ldots,$ $\textit Mat(pop_{i-1})\} $ $ \wedge \; pop_i\nprec_{H'}  push   \}$,   then  $\textit Mat(pop_i)\in B_i $ and  $ \forall push $ $\in B_i.\;  \textit  Mat(pop_i)\nprec_{H'} push$.

 \textbf{Proof.}
 Since $ H'$ is a linearization of  $ H$, $ \textit Mat(pop_i) \prec_{H'} pop_i$. Thus,  $  \textit  Mat(pop_i)\in B_i $.
 Since $   pop_i\nprec_{H'}  push$ (by  $push \in B_i $),  we can get $  push \prec_{H'} pop_i$ (because $  \prec_{H'}$ is a linear order).
Since $  push\in B_i$,
the value pushed by $  push$ is not popped before $  pop_i$ in $  H'$.
Since $  H'$  satisfies the ``LIFO'' semantics, $   push \prec_{H'}  \textit  Mat(pop_i) \prec_{H'}  pop_i$.

\vspace{0.5em}
3. $1\leqslant\forall i\leqslant n$. if  $\textit  Mat(pop_i)=\epsilon$, let $A_i=\{\textit Mat(pop_1),\dots,\textit Mat(pop_{i-1})\}$, and $ C_i=\{push\mid push\simeq_H pop_i \wedge  push\in \textit  Push(H)-A_i \}$, then (a)
 $ \forall push\in \textit  Push(H).\; push\prec_H pop_i \Longrightarrow push \in A_i$; (b) $\forall push\in C_i ,\forall x\leqslant i-1. \; push \nprec_H pop_x $ and $ push\nprec_H \textit Mat(pop_x)$.

 \textbf{Proof.} If a push operation $push\prec_H pop_i $, then $   push\prec_{H'} deq_i $.  Since $  H'$  satisfies the ``LIFO'' semantics, the value pushed by $  push$ is popped by a previous pop operation, i.e., $ push \in A_i$.
If a push operation $  push\in C_i$, then $   pop_i \prec_{H'} push $.
Since $ \forall x\leqslant i-1, \; pop_x \prec_{H'} pop_i$ and $ \;Mat(pop_x)  \prec_{H'} pop_i$,  we get $  pop_x \prec_{H'} push $ and $ \; \textit Mat(pop_x) \prec_{H'} push $.
Thus, $  push \nprec_H pop_x$ and $  push\nprec_H  \textit  Mat(pop_x)$.
 \end{proof}

Consider the  execution  shown in Fig. 4. If $pop(?)$ pops the value pushed by $push(m)$, then the execution is linearizable. In this case, let $pop(m)$ denote the pop operation $pop(?)$.
Next, we will  prove linearizability of the execution using the stack theorem.

The history has a unique safe matching, where $\textit Mat(pop(o))=push(o)$ and   $\textit Mat(pop(m))=push(m) $.
We  construct an extended partial order $\prec_e $ of  $\prec_H$ such that $pop(o)\prec_e push(m)$.
Since  $push(n)\prec_H pop(o)$ and $pop(o)\prec_e push(m)$, $push(n)\prec_e pop(o)\prec_e push(m)$.
We choose the only linearization of pop operations, $pop(o), pop(m)$,  to verify linearizability of the execution. We add subscripts for the two pop operations, i.e.,  $ pop_1(o),  pop_2(m) $.

Since  $pop_2(m) \nprec_h \textit Mat(pop_1(o))=push(o)$, the first condition is satisfied.
For the pop operation, $pop_1(o)$,  the set $B_1=\{push(o),push(n)\}$, is a set of the push operations  which is not bigger than $ pop_1(o)$  w.r.t.  $ \prec_e$. $ push(o)$ is maximal w.r.t.  $ \prec_e$ in the set $B_1$.  Since  $\textit Mat(pop(o))=push(o)$,  the second condition is satisfied in this case.
For the pop operation, $pop_2(m)$,   the set $B_2=\{push(m), push(n)\}$, is a set of the push operations whose values have not been popped by the pop operations ahead of $pop_2(m)$ and which is not bigger than $pop_2(m)$  w.r.t.  $ \prec_e$.
$push(m)$ is maximal in the set $B_2$ w.r.t.  $ \prec_e$.  Since $\textit  Mat(pop(m))=push(m)$,  the second condition is satisfied in this case.

Thus, if $pop(?)$ pops the value pushed by $push(m)$, then the execution is linearizable.
 Note that in the execution, if $pop_2(?)$ pops the value of $push(n)$, then the execution will not  be linearizable. This is because there does not exist an extended partial order such that both of  the two pop operations satisfy the second condition.

In Section 5.2, we  will show how to construct  a  linearization of a specific execution  using the above method used in the proof of Theorem \ref{stackTheorem}.

\section{Verifying the Treiber Stack and  the TS stack}

In this section, we first illustrate the proof technique on a simple concurrent stack: the Treiber stack \cite{Treiber}, then a sophisticated concurrent stack: the Time-Stamped (TS) stack \cite{TSStack}.

\subsection{Verifying the Treiber Stack}

The Treiber stack (Fig. 5) is a lock-free concurrent stack  based on a singly-linked list with a top pointer $S$.   The push and
pop methods, which avoid the overhead of acquiring and releasing locks,  try to update the top pointer using CAS instructions to finish their operations.

\begin{figure}[ht]
  \centering
  \includegraphics[height=0.25\textwidth]{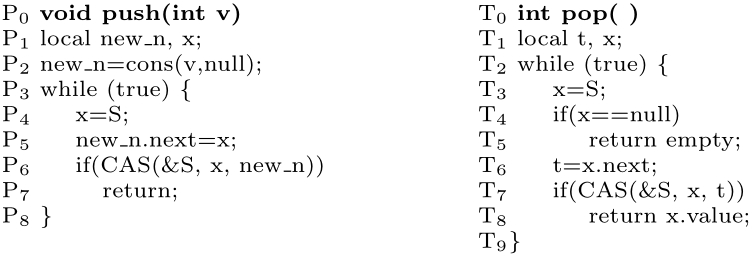}
   \caption{the Treiber stack  }
 \end{figure}
A push operation always inserts a node with a value into a list;
A pop operation  either
 removes a node from  a list and returns the value of the node or returns $empty$;
 A node is removed  at most once.
Thus, for a complete history $ H$ of the
Treiber stack,
 there is a safe mapping $\textit{Mat}$
 from $\textit Pop(H)$  to $\textit Push(H)\cup\{\epsilon\}$.

For a non-empty pop operation,  the final
$ T_7$  action   successfully removes the head node; for an empty pop operation,  when  the final $ T_3$  action is executed, the list is empty.
For any execution,   the linear order  of   pop operations is   constructed in terms of  these actions (i.e., the pop operations are arranged in the execution order of these removing actions).

For any execution,  the extension $\prec_{e}$ of  the happened-before order $\prec_{H}$
is   constructed in terms of the following actions:
for a push operation, we choose the final
$P_6$ action  (which successfully inserts  a new node);
for a non-empty pop operation,  we choose the final
$ T_7$ action; for an empty pop operation, we choose the final $ T_3$  action.
For two operations $op_1$ and $op_2$, if the chosen action of $op_1$  precedes the one of $op_2$ in the execution, then $op_1\prec_{e}op_2$. In a complete history,  the extension $\prec_{e}$ is  a linear order.
If a non-empty pop operation is smaller than  a push operation w.r.t. $\prec_{e}$,
the pop operation cannot observe the effect of the push operation.
This is because the  push operation inserts a node into the list by $P_6$ after the pop operation
performs the removing action $ T_7$.

\begin{theorem}
Every complete history $H$  of the Treiber stack is linearizable. \end{theorem}

\begin{proof}

Assume$ pop_1,pop_2,\dots,pop_n$ is the
linearization of pop operations in $H$  constructed by the above way.
Based on  the safe mapping  $\textit{Mat}$ and the extension $ \prec_{e}$  of $ \prec_{H}$,
we prove that $H$  satisfies the three conditions of Theorem \ref{stackTheorem}.

\vspace{0.5em}
1. $ \forall i,j.\; 1<j\leqslant n$,  $ 1\leqslant i<j. \;  \textit  Mat(pop_i)\neq \epsilon \Longrightarrow pop_j \nprec_H  \textit  Mat(pop_i)$.

\textbf{Proof.}  By the executing process,
the removing action $T_{7}$ of $ pop_i$ is executed after
the inserting node action $ P_6$ of $ \textit Mat(pop_i)$.
By the constructing way of the linear order of pop operations and  $ i<j$, the removing action of $ pop_j$ is executed after the removing action of $ pop_i$.
 Thus $ pop_j\nprec_H  \textit  Mat(pop_i)$.

\vspace{0.5em}
2.  $1\leqslant\forall i\leqslant n$. if $\textit  Mat(pop_i)\neq\epsilon$, let the set $B_i=\{push\mid push\in \textit  Push(H)-\{\textit Mat(pop_1),$ $\ldots,$ $\textit Mat(pop_{i-1})\} $ $ \wedge \; pop_i\nprec_{e}  push   \}$,   then  $ \textit Mat(pop_i)\in B_i $ and  $ \forall push $ $\in B_i.\;  \textit  Mat(pop_i)\nprec_{e} push$;

\textbf{Proof.}
By $ push\in B_i$, the node inserted by $ push$ is not removed
before the removing  action  of $ pop_i$.
Since $ pop_i\nprec_{e}  push$ (by $  push\in B_i$),
we get $ push\prec_{e} pop_i $ (because $ \prec_{e}$ is a linear order).
Since $ push\prec_{e} pop_i $, the inserting node action  of $push$ is executed  before the removing  action of $ pop_i$ (by the constructing rule of   $ \prec_{e}$).
Thus $ push$ has inserted a node into the list
before the removing  action of $ pop_i$.
The node inserted by $ \textit Mat(pop_i)$ is  a head node when $ pop_i$ removes the node.
In any state of the stack, the head node is the latest node.
Thus, we get $push \prec_{e}  \textit  Mat(pop_i)$.

\vspace{0.5em}
3. $1\leqslant\forall i\leqslant n$. if  $\textit  Mat(pop_i)=\epsilon$, let $A_i=\{\textit Mat(pop_1),\dots,\textit Mat(pop_{i-1})\}$, and $ C_i=\{push\mid push\simeq_H pop_i \wedge  push\in \textit  Push(H)-A_i \}$, then (a)
 $ \forall push\in \textit  Push(H).\; push\prec_H pop_i \Longrightarrow push \in A_i$; (b) $\forall push\in C_i ,\forall x\leqslant i-1. \; push \nprec_H pop_x $ and $ push\nprec_H \textit Mat(pop_x)$.

\textbf{Proof.} Since $ pop_i$ returns $ empty$, the list is empty at the time point when the statement $ T_ {3}$ is executed.
  Thus for any push operation $ push $ which precedes $ pop_i$, the value pushed by $ push $  is removed by a previous pop operation (i.e., $ pop_x,\;x<i$).
  Thus $ push \in A_i$.
Since at the above time point the list is empty,
 the push operations in the set $ C_i $ do not complete the inserting node actions (at $ P_6$). However, the  previous  pop operations and their matching push operations complete the removing and inserting node actions, respectively. Thus the second clause of the condition is satisfied.
\end{proof}

\subsection{Verifying the Time-Stamped Stack}
Proving linearizability of the TS stack is challenging, because neither of its push and pop methods have ``fixed linearization points'' (see \cite{TSStack}).
Fig. 6 shows the pseudo code for  the TS stack.
We use $\prec_{ts}$  operator for timestamp comparison.
There are a number of  implementations of the time stamping algorithm \cite{5haas}.
  All these implementations guarantee that (1) in a sequential execution of two calls to the algorithm, the latter returns a bigger timestamp than the former and (2) a concurrent and overlapping execution of two calls to the algorithm  generates two  incomparable  timestamps.

\begin{figure}[ht]
  \centering
  \includegraphics[height=0.6\textwidth]{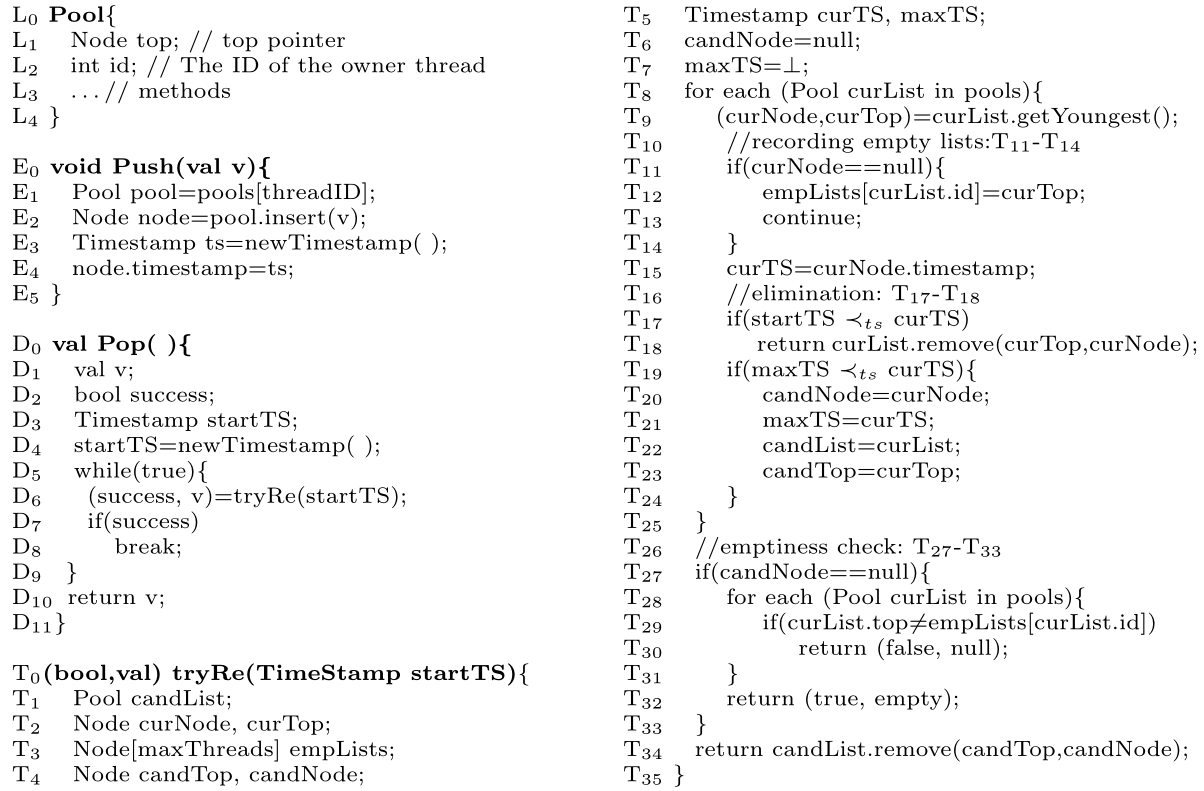}
  \caption{the TS stack }
 \end{figure}

$\textit Pool$ is a singly linked list with a top pointer $top$.
This stack maintains an
array \textit{pools} of singly-linked lists (i.e., instances of
$\textit Pool$), one for each thread (identified by $id$ of $\textit Pool$). A thread only inserts elements into its associated list by push operations.
Each node of $\textit Pool$ contains  a timestamp field $timestamp$, a data value field and a next pointer field.
 The methods of the singly-linked list $\textit Pool$  are as follows.  These methods  are linearizable, and can be viewed as atomic actions.
\begin{itemize}
\item $ insert(v)$ - inserts a node  with a value $v$ and a timestamp $\top$,  to the head of the list and returns a reference to the new node.
\item $ getYoungest()$ - returns a reference to the node with the youngest timestamp, or  $null$ if the list is $ empty$, together with the top pointer of the list.
\item $ remove(top, node)$ - tries to remove the given node from the list, where top denotes the  top pointer of the list. Returns $true$ and the value of the node if it succeeds, or returns \textit{false} and $null$ otherwise.
\end{itemize}

A $ push$ operation of a thread first  inserts a new node into  its associated list (line $ E_2$), then  generates a timestamp (line $ E_3$) and sets the timestamp field of the new node to the generated timestamp (line $ E_4$).

A $ pop$ operation first generates a timestamp \textit{startTS} (line $ D_4$),
attempts to remove an element  by calling the  $ tryRe$ method.
The   $ tryRe$ method traverses  all lists, searching for the node with the youngest timestamp to remove (line $ T_8-T_{25}$). If the youngest node has been found, the method tries to remove the node (line $ T_{34}$).

During the traversing of  the $ tryRe$ method, if it finds a node whose timestamp is bigger than the timestamp \textit{startTS} of the pop operation (line $ T_{17}$), then the $ tryRe$ method tries to remove the node (line $ T_{18}$).
In this case, the node is pushed during the execution of the current pop operation. Thus, the push operation which inserts the node is interleaved with the current
pop operation. If the $ tryRe$ method succeeds in removing the node,  we call them a timestamp elimination pair.

During the traversing of  the $ tryRe$ method, if it finds  an empty list, then the  top pointer of the list is recorded in the array \textit{empLists} (line $ T_{11}$-line $ T_{14}$).  After the traversing of  the $ tryRe$ method,  if no candidate node for removal is found, then the $ tryRe$ method checks whether the top pointers of all lists have changed (line $ T_{27}$-line $ T_{33}$).
If not, the $ tryRe$ method returns \textit{empty}, and then the pop operation returns \textit{empty}.
In this case, all lists must be empty when $ T_{27}$ begins to execute.
Otherwise, the $tryRe$  method returns \textit{false} (line $ T_{30}$),  and then the pop operation restarts.

A push operation always inserts a node with a value into a list;
A pop operation either  removes a node from  a list and returns the value of the node or returns $empty$; A node is removed  at most once.
Thus, for a complete history $ H$ of the
TS stack, there is a safe mapping $\textit Mat$  from $\textit Pop(H)$  to $\textit  Push(H)\cup\{\epsilon\}$.

In terms of Theorem \ref{elimiTheorem},  the timestamp elimination pairs  do not
 violate linearizability of concurrent stacks.
Thus, the timestamp  elimination pairs can be ignored  when we prove linearizability of the TS stack. The following  theorem and lemmas only considers  the histories without containing the timestamp  elimination pairs.

Theorem \ref{tsstackTheo}  states that the TS stack is linearizable, and the following two
lemmas are used in the proof of the theorem.

\begin{lemma} \label{tss1}
For any  pop operation $po$ and any
 push operation $pu$ in an execution, if  $po\nprec_{ts} pu$, then
  $pu$  inserts a node into a list  before $po$ begins to  traverse the lists.
\end{lemma}

\begin{proof}The timestamp of  a pop operation is generated by $ D_4$ action before it begins to traverse the lists. The timestamp of a push operation  is generated by $ E_3$ action after it inserts  a node into a list at $ E_2$.
Since $ po\nprec_{ts} pu$,   the $ E_3$ action of  $ pu$ precedes or is interleaved with  the $ D_4$ action of  $ po$. In either case,
 $ pu$ inserts a node into a list  before $ po$ begins to  traverse the lists.
\end{proof}

\begin{lemma} \label{tss2}
For any non-empty pop operation $po$ and for any
 push operation $pu$ in an execution, if the node inserted by  $pu$ is accessed by $po$ during its final traversal and $ \textit Mat(po)\neq pu$,  then  $  \textit  Mat(po) \nprec_{ts} pu$.
\end{lemma}

\begin{proof}
The timestamp of  the node inserted by  $  \textit  Mat(po)$ is  maximal among the nodes which  are accessed  by $ po$ during its final traversal (by $ T_{19}-T_{24}$).
A timestamp of a node is  generated by its corresponding push operation.
Thus,  $   \textit  Mat(po) \nprec_{ts} pu$.
\end{proof}

We choose $ \prec_{ts}$ as    the extension  of  $\prec_{H}$.
For two operations $\textit op_1$ and $\textit op_2$,  $ op_2\prec_{ts} op_1 $  if the timestamp generated by the operation $op_1$  is bigger  than  the one  generated by the operation $ op_2$.

\begin{theorem}\label{tsstackTheo}
For any complete history of the TS stack, let $H$ be a subsequence obtained by deleting  the timestamp elimination pairs of the history. $H$ is linearizable with respect to the standard   sequential   stack specification. \end{theorem}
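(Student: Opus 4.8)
The plan is to apply the stack theorem (Theorem~5.1). Since $H$ is already the subsequence obtained by deleting all elimination pairs, it suffices to exhibit a linearization $Pop_1,\dots,Pop_n$ of the pop operations of $H$, together with the mapping $Match$ shown safe above, and to verify the theorem's two conditions; linearizability of $H$ then follows. Following Section~6, I would take as the linearization point of each non-empty pop its \emph{successful removing CAS} inside $remove$, and as the linearization point of an empty pop the instant at which emptiness is confirmed in the successful pass through lines $T_{16}$--$T_{19}$; the pops are ordered by the real time of these actions. First I would check that this order is a genuine linearization, i.e.\ it refines $\prec_H$: if $Pop_a\prec_H Pop_b$ then $Pop_a$ returns before $Pop_b$ is invoked, so $Pop_a$'s removing/confirming action precedes $Pop_b$'s, and the order is preserved.

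Before touching the conditions I would record the driving facts. From the two timestamp guarantees, $op_1\prec_H op_2$ gives $op_1<_{ts}op_2$ and overlapping timestamp actions are incomparable. Hence, for a surviving matched pair $Match(Pop_i)=Push_i$, one has $Push_i\prec_H Pop_i$ (a surviving pop can return the value neither of a push it overlaps, which would be an elimination pair, nor of a later push), so $Push_i<_{ts}Pop_i$, i.e.\ the timestamp of $Push_i$'s node is below $Pop_i$'s $startTime$. Moreover a removal through the $T_9$ branch requires the node's timestamp to exceed $startTime$, which forces that node's push to overlap $Pop_i$; every such pair is an elimination pair and has been deleted, so \emph{every non-empty pop of $H$ removes its node through $T_{20}$}. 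I would also note that within one pool the head carries the youngest timestamp, since a single thread's pushes are sequential, so $getYoungest$ never under-reports the best timestamp available in a list.

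The core of the proof, and the step I expect to be the main obstacle, is the first condition: that $Pop_i$ removes a $\prec_H$-maximal push among those preceding it that are not already matched to $Pop_1,\dots,Pop_{i-1}$. I would argue by contradiction, supposing an unmatched push $q$ with $Push_i\prec_H q\prec_H Pop_i$; then the timestamp of $q$'s node strictly exceeds that of $Push_i$'s node, and $q$ completes before $Pop_i$ starts, so $q$'s node is present when the successful $tryRem$ of $Pop_i$ reads $q$'s pool---unless removed earlier. Here the linearization points pay off: a pop that removed $q$'s node before that read has its removing action before $Pop_i$'s, hence index $j<i$, making $q$ one of $Match(Pop_1),\dots,Match(Pop_{i-1})$, contradicting that $q$ is unmatched. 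So $getYoungest$ on $q$'s pool returns a node whose timestamp is at least that of $q$'s node; and because a surviving pop removes through $T_{20}$, that head cannot show $\top$ (which would fire the $T_9$ branch), so its timestamp is a real value strictly above that of $Push_i$'s node. By transitivity of $<_{ts}$ this bumps $MaxTS$ and forces the final candidate at $T_{20}$ to differ from $Push_i$'s node, contradicting that $Pop_i$ removed it. The entire delicacy is pinning down, via the removing-action order, exactly which nodes are guaranteed present in the pools when $Pop_i$ inspects them despite the non-atomic traversal; transitivity of the timestamp order then dissolves the apparent difficulty coming from incomparable timestamps.

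For the second condition I would treat empty pops through the emptiness check. A pop returns $EMPTY$ only after reading every pool empty and confirming at lines $T_{16}$--$T_{18}$ that no $top$ changed; the ABA counter on $top$ guarantees that an unchanged top means the pool stayed empty across that window. For part~(a), any push $p\prec_H Pop_i$ inserts its node before $Pop_i$ starts, yet $Pop_i$ reads $p$'s pool empty, so $p$'s node was removed before that read by some $Pop_j$ whose removing action precedes $Pop_i$'s confirmation, giving $j<i$ and $p=Match(Pop_j)$. For part~(b), a push in $PPN$ overlaps $Pop_i$ and is unmatched by earlier pops; using incomparability of overlapping timestamps together with the same presence-in-pool reasoning, I would show it cannot precede any $Match(Pop_x)$ with $x\le i-1$. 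This case is more routine once the emptiness invariant is in hand, so I expect the first condition to remain the crux. Finally, the reverse direction of Theorem~5.1 is immediate: extracting the pop subsequence from any linearization of $H$ yields a pop order and mapping satisfying the conditions, which would give the ``only if'' part should it be needed.
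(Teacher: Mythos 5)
Your proposal is correct and follows essentially the same route as the paper's proof: both apply Theorem 5.1, take the successful removing action at $T_{20}$ as the linearization point of non-empty pops and the emptiness confirmation at $T_{15}$ for empty pops, prove the first condition by the same timestamp-transitivity contradiction (a surviving unmatched push preceding $Pop_i$ would force a younger head node and hence a different candidate at $T_{20}$), and prove the second condition from the emptiness check. If anything, your write-up is more careful than the paper's, since you explicitly rule out that the competing push's node was removed before $Pop_i$'s traversal (using the order of removing actions and the fact that such a remover would have a smaller index) and you justify that surviving pops never remove through the $T_9$ branch, two steps the paper leaves implicit.
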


\begin{proof}
For a non-empty pop operation, we choose the  successful removing action $ T_{34}$ to construct the linearization of pop operations; for an empty pop operation, we choose  the final $ T_{27}$  action (at the time point, all lists are empty).
Assume
$pop_1,\dots,pop_n$
is   the linearization  of  pop operations constructed in terms of  these atomic actions.
We choose $ \prec_{ts}$ as    the extension  of  $ \prec_{H}$.
Based on the linearization of pop operations, the safe mapping  \textit{Mat} and
the extension $\prec_{ts}$,
we prove that the TS stack satisfies the three conditions of Theorem \ref{stackTheorem}.

\vspace{0.5em}
1. $ \forall i,j.\; 1<j\leqslant n$,  $  1\leqslant i<j. \;  \textit  Mat(pop_i)\neq \epsilon \Longrightarrow pop_j \nprec_H  \textit  Mat(pop_i)$.

\textbf{Proof.}  By the executing process,
the inserting node action $ E_2$ of $  \textit  Mat(pop_i)$ precedes the removing  action  of $ pop_i$.
 By the constructing rule of the linearization of pop operations and $ i<j$, the removing  action of $ pop_i$ precedes
the one   of $ pop_j$.
 Thus $ pop_j\nprec_H  \textit  Mat(pop_i)$.

\vspace{0.5em}
2.
$1\leqslant\forall i\leqslant n$. if $\textit  Mat(pop_i)\neq\epsilon$, let the set $B_i=\{push\mid push\in \textit  Push(H)-\{\textit Mat(pop_1),$ $\ldots,$ $\textit Mat(pop_{i-1})\} $ $ \wedge \; pop_i\nprec_{ts}  push   \}$,   then  $\textit Mat(pop_i)\in B_i $ and  $ \forall push $ $\in B_i.\; \textit Mat(pop_i)\nprec_{ts} push$.

\textbf{Proof.}  Since $ H$ does not contain timestamp elimination pairs,
the timestamp $ startTS$ of $ pop_i$ is not smaller than the one of $  \textit  Mat(pop_i)$.
 Thus,  $\textit  Mat(pop_i)\in B_i$.
Since $ \forall  push\in B_i$,  $pop_i\nprec_{ts}  push$ (by the property of the set $B_i$).
Since $pop_i\nprec_{ts}  push$, the push operation $ push$ inserts a node into a list before $pop_i$ begins to traverse the lists (by Lemma \ref{tss1}).
Thus, by Lemma \ref{tss2}, $\textit  Mat(pop_i) \nprec_{ts} push$.

\vspace{0.5em}
3.
$1\leqslant\forall i\leqslant n$. if  $\textit  Mat(pop_i)=\epsilon$, let $A_i=\{\textit Mat(pop_1),\dots,\textit Mat(pop_{i-1})\}$, and $ C_i=\{push\mid push\simeq_H pop_i \wedge  push\in \textit  Push(H)-A_i \}$, then (a)
 $ \forall push\in \textit  Push(H).\; push\prec_H pop_i \Longrightarrow push \in A_i$; (b) $\forall push\in C_i ,\forall x\leqslant i-1. \; push \nprec_H pop_x $ and $ push\nprec_H \textit Mat(pop_x)$.

\textbf{Proof.} Since $ pop_i$ returns $ empty$,  all lists are empty at the time point when the final $ T_ {27}$ action of $ pop_i$ is executed.
  Thus for any push operation $push$ which precedes $ pop_i$, the value of $push$  is removed by a  pop operation ahead of $ pop_i$.
  Thus $ push \in A_i$. At the time point when the final $ T_ {27}$ action is executed,
 the push operations in the set $ C_i$ do not complete the inserting node actions (line $ E_2$). However, the  previous  pop operations and their matching push operations complete the removing and inserting node actions, respectively. Thus the second clause of the condition is satisfied.
\end{proof}

Consider the following  execution of the TS stack (adapted from \cite{18}),  depicted in  Fig. 7. We will construct a linearization of the execution  by the method  which is used in the proof of Theorem \ref{stackTheorem}.
  $ push(v,t)$/$ pop(v,t)$ denotes the push/pop operation which pushes/pops the value $v$ and generates the timestamp $t$. The black circles of the pop operations in Fig. 7  stand for the  removing  actions.

\begin{figure}[ht]
 \centering
  \includegraphics[height=0.27\textwidth]{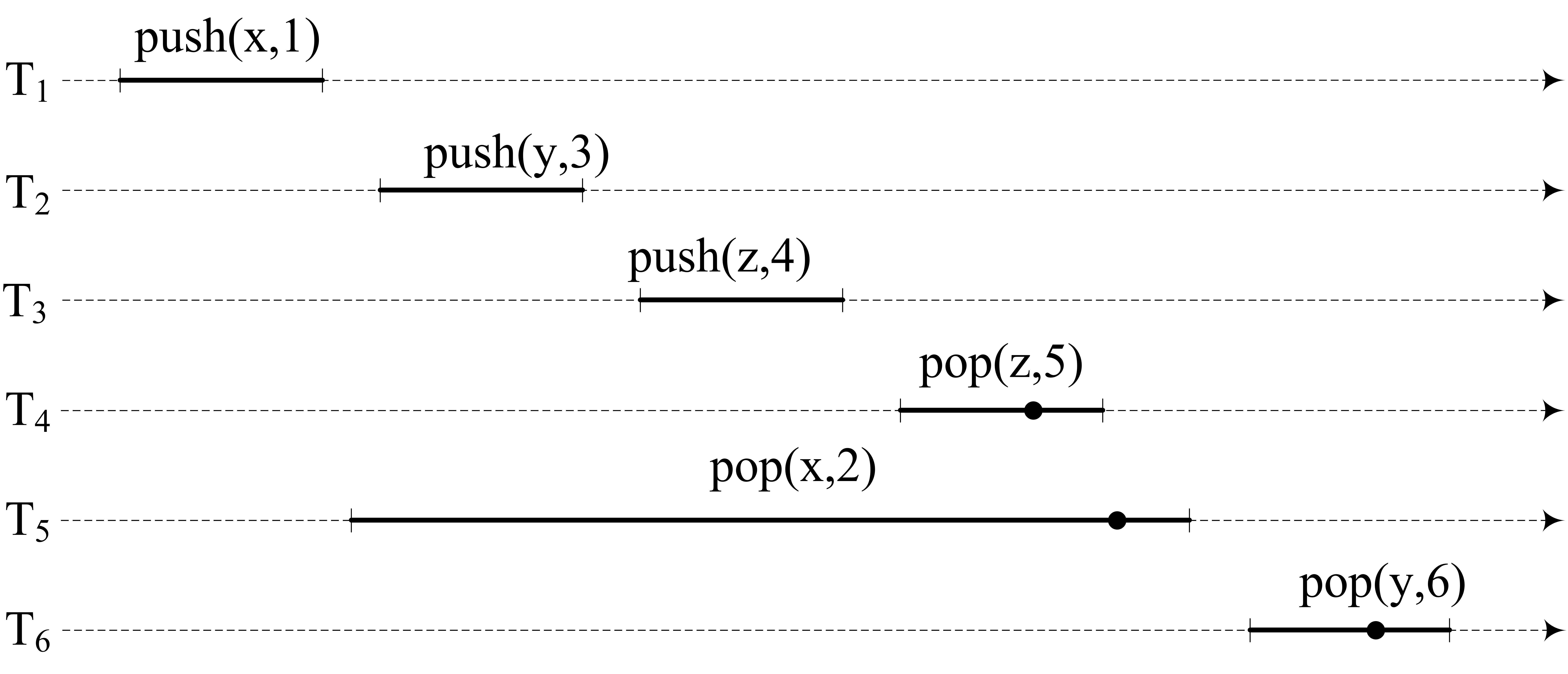}
  \caption{Example execution illustrating how to construct a linearization}
 \end{figure}

First, we construct the following  linearization of push operations by the partial order $ \prec_{ts}$.

\vspace{0.5em}
\begin{center} $ push(x,1),\;push(y,3),\;push(z,4)$ \end{center}
\vspace{0.5em}

Second, we construct the  initial  linearization of pop  operations in terms of their removing actions.

\vspace{0.5em}
\begin{center} $ pop(z,5),\;pop(x,2),\;pop(y,6)$ \end{center}
\vspace{0.5em}

Finally, we insert the pop operations, one after another, into the linearization sequence of push operations by using Algorithm 2.
Since there are no push operations whose timestamps are bigger than  $ pop(z,5)$, by  Algorithm 2, we insert $ pop(z,5)$ to the end of the push sequence and get the following sequence.

\vspace{0.5em}
\begin{center} $ push(x,1),\;push(y,3),\;push(z,4),\;pop(z,5)$ \end{center}
\vspace{0.5em}

By Algorithm 2 and $  pop(x,2)\prec_{ts} push(y,3) $, we  insert $ pop(x ,2)$ before $  push(y,3)$ and get the following sequence.

\vspace{0.5em}
\begin{center} $  push(x,1),\;pop(x,2),\;push(y,3),\;push(z,4),\;pop(z,5)$ \end{center}
\vspace{0.5em}

Since there are no push operations whose timestamps are bigger than $ pop(y,6)$,  we insert $ pop(y,6)$  to the end of the sequence and get the final sequence. Obviously, the  final sequence is a linearization of the execution.

\vspace{0.5em}
\begin{center} $ push(x,1),\;pop(x,2),\;push(y,3),\;push(z,4),\;pop(z,5),\;pop(y,6)$ \end{center}
\vspace{0.5em}

If the timestamp of the pop operation of Thread 5 is 3, for the pop operation, the corresponding set in the second condition is  $\{push(x,1), push(y,3) \}$.
By Lemma \ref{tss1}, $push(x,1)$, $ push(y,3)$ insert their values into the lists  before the pop operation begins to  traverse the lists. By Lemma \ref{tss2}, the pop operation of Thread 5 should pop the value pushed by $ push(y,3)$.

\section{Related Work and Conclusion }

There has been a great deal of work on linearizability verification
\cite{14survey,17,18,19,21,o22,o23,o24,o25,23,24,VerifyingA2,VerifyingA3,VerifyingA4,jia2}.
However, proving linearizability of  sophisticated concurrent data structures is still challenging.

Aspect-Oriented approach \cite{9h} is a property-based approach that is related to our work. The approach reduces
the problem of proving linearizability of concurrent queues to checking four simple properties, each of which can be
proved independently by simpler arguments.
For the non-empty dequeue operations, their  proof technique needs to verify the following  key property:  for two non-overlapping enqueue operations $ enq_1$ and $ enq_2$, if $ enq_1$ precedes $ enq_2$, then the value inserted by $ enq_2$ cannot be removed earlier than the one inserted by $ enq_1$.  {\"O}hman et al. \cite{ohman2022visibility} has extended   the technique   to snapshot data structures and  has axiomatized the important internal properties shared by several snapshot data structures.

However, they do not extend the approach to concurrent stacks.
Our work  is inspired by their work.
Our basic idea is that for any linearizable execution of a concurrent stack, every pop operation must pop the value pushed by a latest push operation of the
ones whose effects can be observed by the pop operation, and  we reduce the problem of proving linearizability of concurrent stacks to establishing a set of conditions based on the happened-before order of operations.

Bouajjani et al. \cite{21}  propose a forward simulation technique for proving linearizability.
They have successfully applied
the method to prove the TS stack and the HW queue. In fact, there does not exist a forward
simulation between the TS stack and the standard sequential stack. They need to construct a deterministic atomic reference implementation
(as an intermediate specification) for the TS stack, and the linearizability proof is reduced to showing that the TS stack
is forward-simulated by the intermediate specification.

Khyzha et al. \cite{18} propose a proof technique based on partial orders that is related to our work. The key idea
of their technique is to incrementally construct an abstract history—a partially ordered history of operations; the
linearizability proof of a concurrent data structure is reduced to establish a simulation between its execution and a
growing abstract history. They formalise the technique as a program logic based on rely-guarantee reasoning. Their proof technique is generic and can handle concurrent data structures with non-fixed linearization points. They have
applied the proof technique to prove the HW queue, the TS queue and the Optimistic set and conjecture that their
proof technique can be applied to prove the TS stack.

There are also lots of  automatic proof techniques  based on tools.  For example,
Ike  et al. \cite{mulder2023proof} develop proof automation  for linearizability in  separation logic.
They implement the proof automation in Coq by extending and generalizing Diaframe, a proof automation extension for Iris.
Abdulla et al. \cite{abdulla2018fragment} extend the observer automata to
the data structures based on heap structure by using a  fragment abstraction. However, for the
priority queues and sets, their  appraoch still requires users to annotate methods with linearization points.

\textbf{Conclusion} We have presented a novel proof technique for verifying linearizability of concurrent stacks. Our key contributions include (1) proving the soundness of the elimination mechanism, a common optimization in concurrent stacks, and (2) developing a stack theorem that simplifies linearizability proofs by reducing them to checking a set of conditions on the happened-before order. The main insight behind the stack theorem is to use an extended partial order to capture when a pop operation can observe the effect of a push operation.
We have demonstrated the effectiveness of our approach by applying it to verify two non-trivial concurrent stack algorithms: the Treiber stack and the Time-Stamped stack. The stack theorem greatly simplifies the linearizability proofs for these algorithms compared to prior approaches.
Our proof technique provides a systematic and compositional way to verify concurrent stacks with non-fixed linearization points, a challenging class of algorithms. In future work, we plan to investigate extending our approach to other concurrent data structures such as queues and sets. We also aim to integrate our technique with automated verification tools to enable scalable and push-button verification of concurrent data structures. Finally, we are interested in exploring how our approach can be adapted to reason about relaxed consistency, such as quasi-linearizability \cite{henzinger2013},   intermediate value linearizability \cite{rinberg2023} and interval-linearizability \cite{castaneda2018}.

\textbf{Funding}
This work is supported by National Natural Science Foundation of China [No.62341204] and  Science and Technology Research Project of Jiangxi Province Educational Department [No.GJJ2203609].

\appendix
\section{Algorithms}\label{A2}
Obviously, both of  the following  algorithms  can do the job.
 Algorithm 1  is  used in the proof of  Lemma \ref{elimiLemma}.  Algorithm 2  is  used in the proof of  Theorem \ref{stackTheorem}.

\begin{multicols}{2}
\begin{algorithm}[H]
 \SetAlgoNoLine
\caption{}
\uIf {$ L_n\prec L'$}
{ $L'$ is inserted after $L_n$\;
     \qquad \qquad\qquad$\dots$\;}
\uElseIf {$L_i\prec L'$}
{$L'$ is inserted between  $L_{i}$ and $L_{i+1}$\;
   \qquad \qquad\qquad$\dots$\; }
\uElseIf {$L_1\prec L'$}
{ $L'$ is inserted between $L_1$  and $L_2$\;}
\uElse {
 $L'$ is inserted before  $L_1$\;}
\end{algorithm}
\columnbreak
\begin{algorithm}[H]
 \SetAlgoNoLine
\caption{}
\uIf {$ L'\prec  L_1 $}
{ $L'$ is inserted before $L_1$\;
     \qquad \qquad\qquad$\dots$\;}
\uElseIf {$L'\prec L_i $}
{$L'$ is inserted between  $L_{i-1}$ and $L_{i}$\;
   \qquad \qquad\qquad$\dots$\; }
\uElseIf {$ L'\prec L_n $}
{ $L'$ is inserted between $L_{n-1}$  and $L_n$\;}
\uElse {
 $L'$ is inserted after  $L_n$\;}
\end{algorithm}
\end{multicols}

\bibliographystyle{elsarticle-num}
\biboptions{numbers,sort&compress}
\bibliography{sample-base}
%% If you have bib database file and want bibtex to generate the
%% bibitems, please use
%%
%%  \bibliographystyle{elsarticle-num}
%%  \bibliography{<your bibdatabase>}

%% else use the following coding to input the bibitems directly in the
%% TeX file.

%% Refer following link for more details about bibliography and citations.
%% https://en.wikibooks.org/wiki/LaTeX/Bibliography_Management

\end{document}